\newcommand{\longversion}[1]{#1}
\newcommand{\shortversion}[1]{}
\newcommand{\DTASh}{\textsc{\textbf{DTAS}}\xspace}
\g@addto@macro\normalsize{%
  \setlength\abovedisplayskip{4pt}%
  \setlength\belowdisplayskip{4pt}%
  \setlength\abovedisplayshortskip{2pt}%
  \setlength\belowdisplayshortskip{2pt}%
}
\renewcommand\footnotetextcopyrightpermission[1]{}
\newenvironment{proofsketch}{%
  \par
  \pushQED{\qed}%
  \normalfont \topsep6\p@\@plus6\p@\relax
  \trivlist
  \item[\@proofindent\hskip\labelsep
        {\scshape Proof Sketch\@addpunct{.}}]\ignorespaces
}{%
  \popQED\endtrivlist\@endpefalse
}
\begin{document}

\title{Fair Distribution of Digital Payments: Balancing Transaction Flows for Regulatory Compliance}






 \author{Ashlesha Hota}
 \authornote{Both authors contributed equally to this research.}
\affiliation{%
   \institution{IIT Kharagpur}
    \country{India}
 }
\author{Shashwat Kumar}
\authornotemark[1]
 \affiliation{%
   \institution{IIT Kharagpur}
   \country{India}
}

\author{Daman Deep Singh}
\affiliation{%
  \institution{IIT Delhi}
  \country{India}}

\author{Abolfazl Asudeh}
\affiliation{%
  \institution{University of Illinois Chicago}
  \country{United States}
 }

\author{Palash Dey}
\affiliation{%
 \institution{IIT Kharagpur}
 \country{India}}

\author{Abhijnan Chakraborty}
\affiliation{%
\institution{IIT Kharagpur}
 \country{India}}








\newcommand{\eqdef}{\overset{\mathrm{def}}{=\joinrel=}}
 \renewenvironment{proof}{\noindent{\sc Proof:}}{ \hfill $\square$\\ }
\allowdisplaybreaks

\newcommand{\pl}{{\bf pl}}



\sloppy

\usetikzlibrary{arrows,positioning}

\newcommand{\defproblem}[3]{
  \vspace{1mm}
\begin{center}
\noindent\fbox{

  \begin{minipage}{\textwidth}
  \begin{tabular*}{\textwidth}{@{\extracolsep{\fill}}l} \textsc{\underline{#1}} \\ \end{tabular*}\vspace{1ex}
  {\bf{Input:}} #2  \\
  {\bf{Question:}} #3
  \end{minipage}
 
  }
\end{center}
  \vspace{1mm}
}

\newcommand{\defparproblem}[4]{
  \vspace{1mm}
\begin{center}
\noindent\fbox{

  \begin{minipage}{\textwidth}
  \begin{tabular*}{\textwidth}{@{\extracolsep{\fill}}lr} \textsc{#1}  & {\bf{Parameter:}} #3 \\ \end{tabular*}
  {\bf{Input:}} #2  \\
  {\bf{Question:}} #4
  \end{minipage}
 
  }
\end{center}
  \vspace{1mm}
}

\newdimen\prevdp
\def\leftlabel#1{\noalign{\prevdp=\prevdepth
   \kern-\prevdp\nointerlineskip\vbox to0pt{\vss\hbox{\ensuremath{#1}}}\kern\prevdp}}


\renewcommand{\labelitemi}{$\vartriangleright$}

\newcommand{\SB}{\mathbb{S}}

\newcommand{\PPAD}{\ensuremath{\mathsf{PPAD}}\xspace}
\newcommand{\NP}{\ensuremath{\mathsf{NP}}\xspace}
\newcommand{\NPC}{\ensuremath{\mathsf{NP}}\text{-complete}\xspace}
\newcommand{\NPH}{\ensuremath{\mathsf{NP}}\text{-hard}\xspace}
\newcommand{\PNPH}{para-\ensuremath{\mathsf{NP}\text{-hard}}\xspace}
\newcommand{\el}{\ensuremath{\ell}\xspace}
\newcommand{\suc}{\ensuremath{\succ}\xspace}
\newcommand{\sucb}{\ensuremath{\succ_{\text{best}}}\xspace}
\newcommand{\sucw}{\ensuremath{\succ_{\text{worst}}}\xspace}
\newcommand{\PNENP}{\ensuremath{\mathsf{P\ne NP}}\xspace}
\newcommand{\WOH}{\ensuremath{\mathsf{W[1]}}-hard\xspace}
\newcommand{\WO}{\ensuremath{\mathsf{W[1]}}\xspace}
\newcommand{\WT}{\ensuremath{\mathsf{W[2]}}\xspace}
\newcommand{\WOC}{\ensuremath{\mathsf{W[1]}}-complete\xspace}
\newcommand{\MEAF}{{\sc Minimum Edge Activation Flow}\xspace}
\newcommand{\DTASf}{{\sc Decoupled Two-Stage Allocation Strategy}\xspace}
\newcommand{\DTAS}{{\sc DTAS}\xspace}

\newcommand{\CARLf}{{\sc Capacity-Aware Reuse-first Layered Allocation}\xspace}
\newcommand{\CARL}{{\sc CARL}\xspace}

\newcommand{\WTH}{\ensuremath{\mathsf{W[2]}}-hard\xspace}
\newcommand{\FPT}{\ensuremath{\mathsf{FPT}}\xspace}
\newcommand{\fpt}{\ensuremath{\mathsf{FPT}}\xspace}
\newcommand{\xp}{\ensuremath{\mathsf{XP}}\xspace}
\newcommand{\tsat}{\ensuremath{(3,\text{B}2)}-{\sc SAT}\xspace}
\newcommand{\SAT}{\ensuremath{3}-{\sc SAT}\xspace}
\newcommand{\GFAC}{{\sc General Factor}\xspace}
\newcommand{\Pb}{\ensuremath{P}\xspace}
\newcommand{\psne}{{\sc Exists-PSNE}\xspace}
\newcommand{\bnpg}{\psne}
\newcommand{\sbnpg}{{\sc PSNE-Strict-BNPG}\xspace}
\newcommand{\vc}{\text{vc(\ensuremath{\GG})}}
\newcommand{\vdrs}{{\sc Vertex Deletion to Regular Subgraph}\xspace}
\newcommand{\ds}{{\sc Dominating Set}\xspace}
\newcommand{\TSAT}{\ensuremath{(3,\text{B}2)}-{\sc SAT}\xspace}
\newcommand{\XTC}{{\sc X3C}\xspace}

\newcommand{\DBON}{{\sc \$Bribery over Network}\xspace}
\newcommand{\SBON}{{\sc Shift Bribery over Social Network}\xspace}
\newcommand{\DSP}{{\sc Dominating set}\xspace}
\newcommand{\ktDSP}{{($k$,$t$)-\sc Dominating set}\xspace}

\newcommand{\CSB}{{\sc Combinatorial Shift Bribery}\xspace}

\let\oldlambda\lambda
\renewcommand{\lambda}{\ensuremath{\oldlambda}\xspace}
\let\oldalpha\alpha
\renewcommand{\alpha}{\ensuremath{\oldalpha}\xspace}
\let\oldDelta\Delta
\renewcommand{\Delta}{\ensuremath{\oldDelta}\xspace}

\newcommand{\tw}{\text{tw}\xspace}
\newcommand{\YES}{{\tt yes}\xspace}
\newcommand{\NO}{{\tt no}\xspace}
\newcommand{\yes}{{\tt yes}\xspace}
\newcommand{\no}{{\tt no}\xspace}
\newcommand{\true}{\text{{\sf true}}\xspace}
\newcommand{\false}{\text{{\sf false}}\xspace}

\renewcommand{\AA}{\ensuremath{\mathcal A}\xspace}
\newcommand{\BB}{\ensuremath{\mathcal B}\xspace}
\newcommand{\CC}{\ensuremath{\mathcal C}\xspace}
\newcommand{\DD}{\ensuremath{\mathcal D}\xspace}
\newcommand{\EE}{\ensuremath{\mathcal E}\xspace}
\newcommand{\FF}{\ensuremath{\mathcal F}\xspace}
\newcommand{\GG}{\ensuremath{\mathcal G}\xspace}
\newcommand{\HH}{\ensuremath{\mathcal H}\xspace}
\newcommand{\II}{\ensuremath{\mathcal I}\xspace}
\newcommand{\JJ}{\ensuremath{\mathcal J}\xspace}
\newcommand{\KK}{\ensuremath{\mathcal K}\xspace}
\newcommand{\LL}{\ensuremath{\mathcal L}\xspace}
\newcommand{\MM}{\ensuremath{\mathcal M}\xspace}
\newcommand{\NN}{\ensuremath{\mathcal N}\xspace}
\newcommand{\OO}{\ensuremath{\mathcal O}\xspace}
\newcommand{\PP}{\ensuremath{\mathcal P}\xspace}
\newcommand{\QQ}{\ensuremath{\mathcal Q}\xspace}
\newcommand{\RR}{\ensuremath{\mathcal R}\xspace}
\renewcommand{\SS}{\ensuremath{\mathcal S}\xspace}
\newcommand{\TT}{\ensuremath{\mathcal T}\xspace}
\newcommand{\UU}{\ensuremath{\mathcal U}\xspace}
\newcommand{\VV}{\ensuremath{\mathcal V}\xspace}
\newcommand{\WW}{\ensuremath{\mathcal W}\xspace}
\newcommand{\XX}{\ensuremath{\mathcal X}\xspace}
\newcommand{\YY}{\ensuremath{\mathcal Y}\xspace}
\newcommand{\ZZ}{\ensuremath{\mathcal Z}\xspace}

\newcommand{\SSS}{\overline{\SS}\xspace}

\newcommand{\MF}{\ensuremath{\mathfrak M}\xspace}
\newcommand{\AF}{\ensuremath{\mathfrak A}\xspace}
\newcommand{\GF}{\ensuremath{\mathfrak G}\xspace}
\newcommand{\PF}{\ensuremath{\mathfrak P}\xspace}

\newcommand{\ov}[1]{\ensuremath{\overline{#1}}}

\newcommand{\aaa}{\ensuremath{\mathfrak a}\xspace}
\newcommand{\bbb}{\ensuremath{\mathfrak b}\xspace}
\newcommand{\ccc}{\ensuremath{\mathfrak c}\xspace}
\newcommand{\ddd}{\ensuremath{\mathfrak d}\xspace}
\newcommand{\eee}{\ensuremath{\mathfrak e}\xspace}
\newcommand{\fff}{\ensuremath{\mathfrak f}\xspace}
\newcommand{\iii}{\ensuremath{\mathfrak i}\xspace}
\newcommand{\jjj}{\ensuremath{\mathfrak j}\xspace}
\newcommand{\kkk}{\ensuremath{\mathfrak k}\xspace}
\newcommand{\mmm}{\ensuremath{\mathfrak m}\xspace}
\newcommand{\nnn}{\ensuremath{\mathfrak n}\xspace}
\newcommand{\ooo}{\ensuremath{\mathfrak o}\xspace}
\newcommand{\ppp}{\ensuremath{\mathfrak p}\xspace}
\newcommand{\qqq}{\ensuremath{\mathfrak q}\xspace}
\newcommand{\rrr}{\ensuremath{\mathfrak r}\xspace}
\newcommand{\sss}{\ensuremath{\mathfrak s}\xspace}
\newcommand{\ttt}{\ensuremath{\mathfrak t}\xspace}
\newcommand{\uuu}{\ensuremath{\mathfrak u}\xspace}
\newcommand{\vvv}{\ensuremath{\mathfrak v}\xspace}
\newcommand{\www}{\ensuremath{\mathfrak w}\xspace}
\newcommand{\xxx}{\ensuremath{\mathfrak x}\xspace}
\newcommand{\yyy}{\ensuremath{\mathfrak y}\xspace}
\newcommand{\zzz}{\ensuremath{\mathfrak z}\xspace}

\newcommand{\EB}{\ensuremath{\mathbb E}\xspace}
\newcommand{\NB}{\ensuremath{\mathbb N}\xspace}
\newcommand{\ZB}{\ensuremath{\mathbb Z}\xspace}
\newcommand{\RB}{\ensuremath{\mathbb R^+}\xspace}

\newcommand{\ffrac}{\flatfrac}
\newcommand{\nfrac}{\nicefrac}

\newtheorem{observation}{\sc Observation}[section]
\newtheorem{claim}{\sc Claim}[section]
\newtheorem{reductionrule}{\sc Reduction rule}
\newtheorem{probdefinition}{\sc Problem Definition}[section]

\newcommand{\eps}{\ensuremath{\varepsilon}\xspace}
\renewcommand{\epsilon}{\eps}

\newcommand{\ignore}[1]{}

\newcommand{\pr}{\ensuremath{\prime}}
\newcommand{\prr}{\ensuremath{{\prime\prime}}}

\renewcommand{\leq}{\leqslant}
\renewcommand{\geq}{\geqslant}
\renewcommand{\ge}{\geqslant}
\renewcommand{\le}{\leqslant}

\newcommand{\greedy}{\textsc{Greedy committee}\xspace}

\crefname{theorem}{Theorem}{\bf Theorems}
\crefname{observation}{Observation}{\bf Observations}
\crefname{lemma}{Lemma}{\bf Lemmata}
\crefname{corollary}{Corollary}{\bf Corollaries}
\crefname{proposition}{Proposition}{\bf Propositions}
\crefname{definition}{Definition}{\bf Definitions}
\crefname{claim}{Claim}{\bf Claims}
\crefname{reductionrule}{Reduction rule}{\bf Reduction rules}
\crefname{probdefinition}{Problem Definition}{Problem Definitions}

\renewcommand{\shortauthors}{Hota et al.}
\newtheorem{axiom}[theorem]{Axiom}
\theoremstyle{remark}
\newtheorem{remark}[theorem]{Remark}
\begin{abstract}
The Unified Payments Interface (UPI) is the world's largest real-time payment system, processing more than 21 billion transactions per month, and has emerged as India's primary mode of digital payments. However, the concentration of UPI  transactions within just two apps - PhonePe and Google Pay - has raised concerns of duopoly in India's digital financial ecosystem. To address this, the National Payments Corporation of India (NPCI) has mandated that no single UPI app should exceed 30\% of total transaction volume. Enforcing this cap, however, presents a significant computational and operational challenge: {\it how to redistribute user transactions across apps 
while minimizing user inconvenience and maintaining capacity limits?}

In this paper, we formalize this challenge as the {\sc Cap-Constrained UPI} problem and model it as a {\sc Minimum Edge Activation Flow} problem on a bipartite user-application network, where activating an edge represents an additional app installation. Our goal is to determine the smallest set of new installations that allows the system to route all transactions without violating app capacity constraints. We show that \MEAF is NP-Complete through a reduction from 3-{\sc Partition} problem, motivating the need for scalable approximation strategies. To tackle this challenge, we propose \DTASf (\DTAS), a scalable allocation framework with both offline and adaptive online variants. We further extend it with a fairness-aware mechanism that balances transaction load across applications while keeping additional installations low. Our experimental results on a dataset of 100~million transactions
show that DTAS comes within 1--2 additional installations of the ILP optimum at every tested scale, while running several orders of
magnitude faster. We further demonstrate a concrete trade-off curve between fairness
and installation cost that can directly inform NPCI's forthcoming regulatory enforcement deadline. To our knowledge, this is the first work addressing the problem of balancing financial transaction flows for regulatory compliance and likely have significant real-world impact.

\end{abstract}



\keywords{UPI Transaction Caps, Flow Optimization, Fair Distribution}


\maketitle

\section{Introduction}
Digital payment platforms have transformed how billions of people transact. In India, this shift is driven by the Unified Payments Interface (UPI), a government-backed system that enables instant bank-to-bank transfers across more than 77 apps such as PhonePe, Google Pay, and Paytm~\cite{upi_gov}. Unlike proprietary systems, UPI is shared public infrastructure with government-subsidized zero-fee transactions, and currently accounting for over 85\% of India’s digital payments~\cite{govt-incentive}. 

In terms of absolute numbers, as of mid-2025, UPI served over 491~million users and 65~million merchants, processing over 640~million transactions daily, surpassing Visa's daily volume~\cite{cnbc2025}. 
UPI now accounts for nearly 50\% of all global real-time payment transactions, making it the world's most consequential retail payment
rail~\cite{imf2025}. Beyond India, UPI has been deployed in nine countries: Bhutan, Nepal, Singapore, Sri Lanka, Mauritius, UAE, France, Cyprus, and Qatar, and has been integrated into the PayPal World platform for
international transactions~\cite{cnbc2025}.

Despite the open architecture, usage is highly concentrated. PhonePe and Google Pay together process over 80\% of transactions~\cite{moneycontrol}. This concentration creates systemic risk, since failures in a single app can disrupt a large fraction of payments, and leads to an uneven use of public subsidies, with most benefits going to a small number of platforms. 
To address this, India's payments regulator, the National Payments Corporation of India (NPCI), introduced a policy mandating that no single UPI app may process more than 30\% of total 
transactions~\cite{reuters2024}. The goal is not to penalize dominant apps, but to ensure the ecosystem remains competitive, resilient, and equitable so that public infrastructure serves all participants, not just the most popular ones. However, enforcing this cap is far from straightforward as  users naturally gravitate toward familiar apps. NPCI can not ask  millions of users to change their habits overnight.

A naive enforcement mechanism is a tail-drop policy, blocking further transactions once an app exceeds its quota. Although effective, it leads to abrupt failures and poor user experience. A more practical solution is an alert-based strategy, where users are notified when an app nears its limit and encouraged to switch to another app. This idea parallels Random Early Detection (RED) in network congestion control~\cite{mahawish2022improving}, where early warnings prevent sharp throughput drops.

However, implementing such alert-based strategies is limited in practice, since users cannot be expected to install and maintain multiple UPI applications. As a result, routing decisions should consider user preferences and usage frequency rather than assuming constant switching across apps.
In practice, achieving compliance with the cap requires that some users install additional applications in advance, enabling the system to redistribute load away from heavily used apps. Here, our notion of fairness is not based on equilibrium considerations, but rather on preventing excessive concentration of transactions on a small subset of applications, thereby ensuring balanced utilization of the shared infrastructure. 

A natural approach is to incentivize such installations. Since the government already subsidizes UPI transactions to maintain zero fees~\cite{govt-incentive}, a portion of this budget could be repurposed to encourage users to adopt additional, underutilized applications -- for example, via time-bound or usage-linked rewards. Given that such incentives are inherently limited, the central question we address is: \textit{what is the minimum number of additional app installations required to ensure that all transactions can be routed within the regulatory cap, without any failures?}

The problem arises at two natural operational timescales in digital payment systems. In the offline setting, we assume access to historical transaction logs, which reveal user activity patterns such as frequent users and their transaction volumes. This allows the system to proactively recommend additional app installations for a carefully chosen subset of users, so that future transaction demand can be feasibly routed without disruption. Operationally, this corresponds to periodically running our augmentation algorithm on aggregated data and deploying targeted interventions (e.g., incentives) ahead of time.
In contrast, the online setting captures the real-time nature of payment systems, where transactions arrive sequentially and future demand is unknown. Upon the arrival of each transaction, the system must immediately decide how to route it and, if necessary, whether to activate a new user-application connection without knowledge of future requests.

We formalize this as the \MEAF problem on a bipartite flow network $G = (V = \{s\} \cup U \cup A \cup \{t\}, E = E_{\mathrm{solid}} \cup E_{\mathrm{dashed}})$. Here, users $U$ and apps $A$ form two partitions: solid edges represent existing app installations, dashed edges denote potential ones, $s$–$U$ edges have capacities $t_u$ (user transaction volumes), and $A$–$t$ edges have capacities $c_a$ (app limits). The objective is to activate the minimum number of dashed edges to ensure a feasible integral flow from $s$ to $t$ without exceeding any app's capacity.

Despite its intuitive formulation, the problem is computationally intractable: we prove that determining the smallest feasible activation set or equivalently, the minimal number of additional app installations is \NPC, even when restricted to only three apps. This rules out any polynomial-time exact solution unless $\mathsf{P} = \mathsf{NP}$. Consequently, we propose efficient greedy heuristics that approximate the optimal activation pattern while being scalable for large transaction networks.

\textbf{Our Contributions.} Our main contributions are as follows:

\begin{itemize}

\item We introduce the \textsc{Cap-Constrained UPI}, \textsc{Online Cap-Constrained UPI}, and \textsc{Fair Cap-Constrained UPI} problems for capacity-aware transaction routing, and show that the corresponding decision problem is \NPC\ via a reduction from \textsc{3-Partition};

\item We design two ILP formulations for the problem: an installation-minimization formulation and a fairness-aware extension for balancing transaction loads across applications;

\item We propose \DTASf, a scalable offline layered allocation strategy that exploits workload skewness and application reuse patterns to minimize unnecessary app installations while remaining close to the ILP optimum;

\item Thereafter we develop \textsc{Online}\_\DTAS, an adaptive online algorithm employing sketch-based heavy-user detection, delayed scheduling, and capacity reservation mechanisms, and further introduce \textsc{Fair\_\DTAS}, a fairness-aware extension based on composite allocation scoring; and

\item We present an extensive experimental evaluations on synthetic and transaction-driven datasets against multiple baselines, demonstrating improvements in installation efficiency, fairness, and scalability.

\end{itemize}

\longversion{
\section{Background and Related Work}

The Unified Payments Interface (UPI), developed by the National Payments Corporation of India (NPCI) under the guidance of the Reserve Bank of India (RBI), has become a major part of India’s digital economy. It provides an interoperable payment infrastructure that enables secure, real-time fund transfers and supports innovation through its open design~\cite{rastogi2021unified}. As of 2024, UPI serves over 350 million users, connects more than 550 banks, and supports 77 payment apps including Google Pay, PhonePe, BHIM, and WhatsApp Pay~\cite{upirate}. In 2023, it processed 117 billion transactions worth USD 2.19 trillion~\cite{upirate}. However, rising transaction volumes have simultaneously increased the risk of fraud and anomaly. Several scholarly works, such as~\cite{11170880, gambo2022convolutional, lingareddy2025enhancing}, have emphasized AI-driven approaches to strengthen fraud detection mechanisms in digital payment systems.

Beyond fraud detection, another critical concern in the financial domain is fairness in decision-making. Unfair decision-making in financial services occurs when certain groups or individuals face biased treatment in areas such as loan approvals, credit scoring, or mortgage access. Such bias can arise from historical discrimination or from machine learning models that inadvertently learn unfair patterns from data. Consequently, minority groups or equally qualified applicants may be denied fair financial opportunities, reinforcing existing social and economic inequalities. Song et al.~\cite{song2023towards} address this issue by proposing a Temporal Fair Graph Neural Network (TF-GNN) framework that models financial transactions as dynamic networks and enforces individual fairness over time. They introduce two new fairness notions specific to temporal graphs, provide a theoretical analysis of their fairness regret, and demonstrate through real-world experiments that their approach improves both prediction accuracy and fairness compared to prior methods.

There is another growing problem: UPI usage is highly concentrated. In September 2025, two third-party apps handled over 80\% of all UPI transactions~\cite{moneycontrol}. Such concentration creates systemic risks and limits competition. Similar concentration patterns have also been observed in other large-scale digital payment ecosystems. In China, Alipay and WeChat Pay have become the dominant digital payment platforms and largely displaced traditional card-based payments through strong network effects and ecosystem integration~\cite{cgap_china}. Studies and industry analyses report that the Chinese mobile payment market has evolved into a near-duopoly, raising concerns regarding market concentration and platform dependence~\cite{wechat_alipay_dominance}. These observations suggest that payment concentration is not unique to UPI but represents a broader challenge in digital payment systems.

To address this issue, NPCI introduced a 30\% transaction cap for third-party applications. Existing work mainly focuses on policy and market effects, with little attention to algorithmic approaches for enforcing such caps efficiently.

A natural perspective for our problem comes from network flow and augmentation problems. In particular, the {\sc Pure Fixed Charge Transportation} problem~\cite{ZHU2025100875,adlakha1999fixed,billheimer1973network,estan2021allocate} is closely related and provides important technical foundations. However, our setting differs fundamentally because activating additional edges corresponds to installing new user–application connections. This introduces behavioral and infrastructural constraints that traditional augmentation models do not capture.

Finally, our motivation for early alerts draws inspiration from congestion-control mechanisms such as Random Early Detection (RED)~\cite{mahawish2022improving}, where systems proactively react before hard capacity limits are reached. Rather than dropping requests after exceeding thresholds, RED uses early interventions to prevent congestion collapse. We adopt a similar philosophy: proactively recommending app installations before capacities become binding.

To the best of our knowledge, this is the first work that formally studies and models this concentration issue, proposing mechanisms to enhance resilience and promote a more balanced and inclusive UPI ecosystem.
}

\section{Problem Definition}

We model the problem of fairly balancing UPI transactions as a bipartite flow network. Let $U$ denote the set of users and $A$ denote the set of UPI apps. Each user $u \in U$ generates $t_u$ transactions, while each app $a \in A$ can process at most $c_a$ transactions, typically defined as a fraction of the total transaction volume. Users may already have certain apps installed, represented by solid edges $E_{\rm solid} \subseteq U \times A$, while potential additional installations are represented by dashed edges $E_{\rm dashed} \subseteq U \times A$. A source node $s$ is connected to all users with edges of capacity $t_u$, and a sink node $t$ is connected to all apps with edges of capacity $c_a$. The goal is to identify the minimal subset of dashed edges $E' \subseteq E_{\rm dashed}$ that need to be activated such that all transactions can be routed from $s$ to $t$ without exceeding app capacities or dropping transactions. Figure \ref{fig:prblem-model} illustrates our construction.

We now formally state our problems.

\begin{definition}[\textsc{Cap-Constrained UPI}]
Given a bipartite graph $G = (U, A, E_{\rm solid} \cup E_{\rm dashed})$, where an edge $(u,a) \in E_{\rm solid}$ indicates that user $u$ has application $a$ installed, and $(u,a) \in E_{\rm dashed}$ is a potential installation. Each user $u \in U$ has $t_u$ transactions to route, and each application $a \in A$ has a capacity $c_a \in \mathbb{N}$. The goal is to find a minimum subset $E' \subseteq E_{\rm dashed}$ such that all transactions can be feasibly routed via $E_{\rm solid} \cup E'$, with the load on every application $a \in A$ not exceeding $c_a$.
\end{definition}

\begin{definition}[\textsc{Online Cap-Constrained UPI}]
Given a bipartite graph $G = (U, A, E_{\rm solid} \cup E_{\rm dashed})$, where an edge $(u,a) \in E_{\rm solid}$ indicates that user $u$ has application $a$ installed, and $(u,a) \in E_{\rm dashed}$ is a potential installation. Each application $a \in A$ has a capacity $c_a \in \mathbb{N}$. Transactions arrive as an online sequence $\sigma = (\tau_1, \tau_2, \ldots, \tau_n)$, where each $\tau_i$ belongs to some user $u_i \in U$ and must be routed immediately and irrevocably. The goal is to find a minimum subset $E' \subseteq E_{\rm dashed}$ such that all transactions can be feasibly routed via $E_{\rm solid} \cup E'$, with the load on every application $a \in A$ not exceeding $c_a$.
\end{definition}

\begin{definition}[\textsc{Fair Cap-Constrained UPI}]
Given a bipartite graph $G = (U, A, E_{\rm solid} \cup E_{\rm dashed})$, where an edge $(u,a) \in E_{\rm solid}$ indicates that user $u$ has application $a$ installed, and $(u,a) \in E_{\rm dashed}$ is a potential installation. Each user $u \in U$ has $t_u$ transactions to route, and each application $a \in A$ has a capacity $c_a \in \mathbb{N}$. The goal is to find a minimum subset $E' \subseteq E_{\rm dashed}$ such that all transactions can be feasibly routed via $E_{\rm solid} \cup E'$, with the load on every application $a \in A$ not exceeding $c_a$, and the load vector $L = (\ell_1, \ldots, \ell_{|A|})$ is balanced across applications.
\end{definition}

Let \(L = (\ell_1, \ell_2, \ldots, \ell_{|A|})\)
denote the vector of application loads. Fairness of a routing is evaluated using:
\begin{itemize}
    \item \textbf{Max-Min Gap:}
    \[
    \max_{a \in A} \ell_a
    -
    \min_{a \in A} \ell_a.
    \]
    Smaller gap indicates fairer load balancing.

    \item \textbf{Gini Coefficient:}
    \[
    G(L)
    =
    \frac{
    \sum_{i=1}^{|A|}
    \sum_{j=1}^{|A|}
    |\ell_i - \ell_j|
    }{
    2|A| \sum_{i=1}^{|A|} \ell_i
    }.
    \]
    Lower Gini coefficient implies a more equal distribution of transactions across applications.
\end{itemize}

The \textsc{Fair Cap-Constrained UPI} problem asks for a feasible routing that minimizes
the number of activated edges $|E'|$ while simultaneously maintaining balanced
application loads according to the above fairness measures.

\section{Complexity Status}
 We formulate the problem theoretically as the {\sc Minimum Edge Activation Flow (MEAF)} problem. 

\begin{center}
\fbox{%
\parbox{0.95\columnwidth}{%
\MEAF (MEAF)

\textbf{Input.}
Given a bipartite graph $G = (U \cup A, E_{\mathrm{solid}} \cup E_{\mathrm{dashed}})$, 
a source node $s$ connects to each $u \in U$ by an edge of capacity $t_u$, 
and a sink node $t$ connects to each $a \in A$ by an edge of capacity $c$. 
Edges in $E_{\mathrm{solid}}$ represent existing connections, while 
$E_{\mathrm{dashed}}$ denote optional edges that can be activated.

\textbf{Output:} Find the smallest subset $E' \subseteq E_{\mathrm{dashed}}$ 
such that all demands $t_u$ can be routed from $s$ to $t$ through 
$E_{\mathrm{solid}} \cup E'$ without violating any capacity $c$.}
}
\end{center}

\begin{figure}
    \centering
  \includegraphics[width=0.45\textwidth]{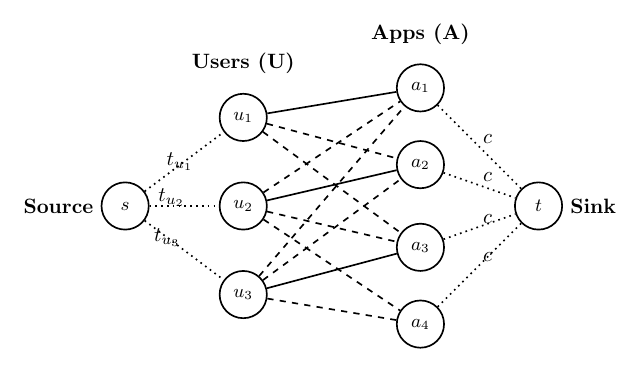}
    \caption{Illustration of the bipartite UPI transaction flow network. Solid edges represent \textbf{pre-installed apps} $(E_{\text{solid}})$, dashed edges denote \textbf{potential installations} $(E_{\text{dashed}})$, 
and dotted edges indicate \textbf{user transactions} $(t_{u_i})$ 
and \textbf{app capacities} $(c)$.}
    \label{fig:prblem-model}
\end{figure}
Formally, let $f(u,a)$ denote the number of transactions routed from user $u$ to app $a$. The flow must satisfy the following conditions: every user’s transactions are fully routed, i.e., $\sum_{a\in A} f(u,a) = t_u$, and the total flow into any app respects its capacity, $\sum_{u\in U} f(u,a) \le c$. Flow is allowed only on solid edges or activated dashed edges, and transaction units are indivisible, so $f(u,a) \in \mathbb{Z}_{\ge 0}$. Finally, the total flow in the network equals the total number of transactions, ensuring no transaction is dropped. The objective is to minimize $|E'|$, the number of additional app installations required to achieve a feasible integral flow. We now formally show that \MEAF is \NPC.

\begin{theorem}
\MEAF is \NPC.
\end{theorem}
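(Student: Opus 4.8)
The plan is to establish both membership and hardness. For membership in \NP, I would take the activated set $E'$ itself as the certificate: once $E'$ is fixed, deciding whether every demand can be served reduces to a single maximum-flow computation on $G$ restricted to $E_{\mathrm{solid}} \cup E'$. Because all edge capacities ($t_u$ on source edges, $c$ on sink edges, $+\infty$ or $t_u$ on the bipartite edges) are integers, the max-flow integrality theorem guarantees that a fractional feasible routing of total value $T = \sum_{u} t_u$ exists if and only if an integral one does. So the verifier checks in polynomial time that $|E'| \le k$ and that the maximum $s$–$t$ flow equals $T$, which settles $\MEAF \in \NP$. For hardness I would reduce from \textsc{3-Partition}: given $3m$ positive integers $a_1,\dots,a_{3m}$ with $\sum_i a_i = mB$ and $B/4 < a_i < B/2$, decide whether they can be split into $m$ triples each summing to exactly $B$.

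The construction is direct. For each integer $a_i$ I create a user $u_i$ with transaction volume $t_{u_i} = a_i$, and for each target group I create an app $v_j$ with capacity $c = B$; I set $E_{\mathrm{solid}} = \varnothing$ and make every user–app pair a dashed edge, then ask whether a feasible activation set of size at most $3m$ exists. The observation that drives correctness is a counting bound: since every $t_{u_i}$ is positive, each user must be incident to at least one activated edge, so any feasible $E'$ satisfies $|E'| \ge 3m$. Consequently $|E'| = 3m$ is achievable only when every user is incident to \emph{exactly one} activated edge, i.e.\ when each user routes its entire demand to a single app.

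Given such a single-edge-per-user assignment, each app $v_j$ receives an integral load equal to the sum of the demands assigned to it, and since total demand $mB$ exactly matches total capacity $m\cdot B$, feasibility forces every app to be filled to precisely $B$. The gap condition $B/4 < a_i < B/2$ then forces each full app to host exactly three users (more than two since each item exceeds $B/4$, fewer than four since each is below $B/2$), which is exactly a valid triple; conversely any \textsc{3-Partition} solution yields a single-edge assignment of cost $3m$ with a feasible integral flow. Hence the \MEAF instance admits activation cost $\le 3m$ iff the \textsc{3-Partition} instance is a \yes-instance, completing the reduction. The step I expect to be the main obstacle is arguing that the minimization objective rigidly forces the one-edge-per-user structure and that no user can benefit from splitting its demand across several apps; this is exactly what the zero-slack demand/capacity match together with the integrality theorem rule out, since any split would either violate a capacity or require a $(3m{+}1)$-st edge. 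Finally, I would remark that instantiating the same gadget from the variant ``partition $n$ integers into three equal-sum subsets'' uses only three apps and the identical argument, which matches the stronger \NPH-with-three-apps claim stated in the introduction.
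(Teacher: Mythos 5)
Your proposal is correct and takes essentially the same route as the paper: the identical reduction from \textsc{3-Partition} ($3m$ users with demands $s_i$, $m$ apps of capacity $B$, $E_{\mathrm{solid}}=\varnothing$, all user--app pairs dashed, budget $k=3m$), the same counting bound forcing exactly one activated edge per user, and the same zero-slack capacity argument combined with $B/4 < s_i < B/2$ to force triples. The only cosmetic difference is your \NP-membership certificate (the set $E'$ plus a max-flow computation with the integrality theorem, versus the paper's certificate that includes an explicit integral flow assignment), which changes nothing substantive.
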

\shortversion{
\begin{proofsketch}
It is easy to see that the problem is in NP. We prove NP-hardness by a polynomial-time reduction from \textsc{3-Partition}.
Let an instance of \textsc{3-Partition} be given by $S=\{s_1,\dots,s_{3m}\}$ and $B$ with $\sum_i s_i=mB$ and $B/4<s_i<B/2$ for all $i$.
We construct an instance of \textsc{UPI flow problem} as follows. Create $3m$ users $U=\{u_1,\dots,u_{3m}\}$ with $t_{u_i}=s_i$, $\forall i \in [3m]$.
Create $m$ apps $A=\{a_1,\dots,a_m\}$ with capacities $c_{a_j}=B$ for all $j$.
Add source $s$ and sink $t$ with edges $(s,u_i)$ of capacity $t_{u_i}$ and edges $(a_j,t)$ of capacity $B$.
Let $E_{\mathrm{solid}}=\varnothing$ and $E_{\mathrm{dashed}}=U\times A$ (i.e., every user can connect to every app via a dashed edge). All $(u,a)$ edges have infinite capacity so that the app capacities are the binding constraints. Set the activation budget $k:=3m$. 
We then argue that the \textsc{3-Partition} instance is a YES-instance if and only if the constructed flow instance admits a feasible integral flow using at most $k=3m$ activated dashed edges. Due to space constraints, the detailed proof is presented in the extended version of the paper\footnote{Available at \href{https://github.com/ENAMINE1/FDP}{https://github.com/ENAMINE1/FDP}}.
\end{proofsketch}
}

\longversion{
\begin{proof}
It is easy to see that the problem is in NP: given a set of activated dashed edges and an integral flow assignment, we can verify in polynomial time that (i) flow conservation holds, (ii) capacities on $(s,u)$ and $(a,t)$ are respected, (iii) flow is sent only on activated dashed edges, and (iv) the number of activated dashed edges is at most $k$.

We prove NP-hardness by a polynomial-time reduction from \textsc{3-Partition}.
Let an instance of \textsc{3-Partition} be given by $S=\{s_1,\dots,s_{3m}\}$ and $B$ with $\sum_i s_i=mB$ and $B/4<s_i<B/2$ for all $i$.
We construct an instance of \textsc{UPI flow problem} as follows.

\smallskip
\noindent\emph{Construction.}
Create $3m$ users $U=\{u_1,\dots,u_{3m}\}$ with $t_{u_i}=s_i$, $\forall i \in [3m]$.
Create $m$ apps $A=\{a_1,\dots,a_m\}$ with capacities $c_{a_j}=B$ for all $j$.
Add source $s$ and sink $t$ with edges $(s,u_i)$ of capacity $t_{u_i}$ and edges $(a_j,t)$ of capacity $B$.
Let $E_{\mathrm{solid}}=\emptyset$ and $E_{\mathrm{dashed}}=U\times A$ (i.e., every user can connect to every app via a dashed edge).
All $(u,a)$ edges have sufficiently large capacity (e.g., capacity $t_u$) so that the app capacities are the binding constraints. Set the activation budget $k:=3m$.

This construction is clearly polynomial in the input size.

\smallskip
\noindent\emph{(\,$\Rightarrow$\,) If the \textsc{3-Partition} instance is a YES-instance, then the constructed flow instance admits a feasible integral flow using at most $k=3m$ activated dashed edges.}
Assume $s$ can be partitioned into $m$ disjoint triples $T_1,\dots,T_m$ with $\sum_{s\in T_j} s=B$ for each $j$.
For each triple $T_j$, choose an app $a_j$ and for each user $u_i$ with $s_i\in T_j$ activate exactly the dashed edge $(u_i,a_j)$ and send the full amount $t_{u_i}=s_i$ on that edge.
For each app $a_j$, the incoming flow is $\sum_{u_i\in T_j} t_{u_i}=B$, so the capacity $(a_j,t)$ is respected.
Each user uses exactly one dashed edge, so the number of activated dashed edges is exactly $3m=k$.
All source capacities $(s,u_i)$ are saturated, hence the total flow value is $\sum_i s_i=mB$.
Thus there exists a feasible integral flow using at most $k$ activations.

\smallskip
\noindent\emph{(\,$\Leftarrow$\,) If the constructed flow instance admits a feasible integral flow using at most $k=3m$ activated dashed edges, then the \textsc{3-Partition} instance is a YES-instance.}
Suppose there is a feasible integral $s$--$t$ flow of value $\sum_{i=1}^{3m} s_i=mB$ using at most $3m$ activated dashed edges.
Since $E_{\mathrm{solid}}=\varnothing$ and every user $u_i$ has positive demand $t_{u_i}=s_i>0$, each user must have at least one activated outgoing dashed edge in order to send any flow.
Therefore any feasible solution uses at least $3m$ activated dashed edges.
By the budget bound, the solution uses exactly $3m$ activations, hence each user activates \emph{exactly one} dashed edge and sends all of its (integral) demand through that edge.

Let $S_j\subseteq U$ be the set of users assigned to app $a_j$ (i.e., those for which $(u,a_j)$ is activated and carries the full $t_u$).
Flow feasibility and capacity imply for every $j$ that
\[
\sum_{u\in S_j} t_u \;\le\; c_{a_j} \;=\; B.
\]
Summing over all apps and using that the total flow equals $mB$ gives
\[
\sum_{j=1}^m \sum_{u\in S_j} t_u \;=\; \sum_{u\in U} t_u \;=\; mB \;=\; \sum_{j=1}^m B,
\]
which forces equality in each app separately: $\sum_{u\in S_j} t_u = B$ for all $j$.
Finally, by the \textsc{3-Partition} bounds $B/4<t_u<B/2$, no app can receive $1$ item (any $t_u<B/2$) or $\ge 4$ items (each $t_u>B/4$ would exceed $B$).
Therefore each $S_j$ has exactly three users and their demands sum to $B$.
The family $\{S_1,\dots,S_m\}$ thus yields a partition of $s$ into $m$ triples each summing to $B$, i.e., a YES-solution for \textsc{3-Partition}.
\end{proof}
}

It is easy to see that \MEAF problem generalizes the \textsc{Set Cover} problem and hence inherits its hardness of approximation \cite{feige1998threshold}. We now state the inapproximability result.

\begin{remark}
Unless $\mathsf{P} = \mathsf{NP}$,
\MEAF admits no polynomial-time approximation algorithm with an appoximation factor better than $(1 - o(1)) \log n$.
\end{remark}

\section{Proposed Methodology}
As shown in the previous section, our problems are \NPC. Therefore, we seek heuristics that perform well in practice. This section presents two ILP formulations for the transaction-routing problem. The first formulation minimizes additional app installations under application-capacity constraints, while the second extends the model with a max-min fairness objective over application loads. 

We then introduce \DTASf, an offline greedy allocation framework that provides key empirical insights into the trade-off between installation efficiency and fairness. Building on these observations, we finally present a family of online algorithms, including our proposed adaptive framework \textsc{Online}\_\DTAS.

\subsection{ILP-Based Routing Framework} 
We begin by presenting Integer Linear Programming (ILP) formulations for the transaction-routing problem. The formulations model the two central objectives in our setting: minimizing additional app installations and achieving fairness across applications.

\textbf{Install-Minimization ILP}: This problem can be formulated as an Integer Linear Programming (ILP) problem. We introduce variables $f(u,a) \in \mathbb{Z}_{\ge 0}$ for the number of transactions routed from user $u$ to app $a$, and binary variables $x(u,a)$ for $(u,a) \in E_{\rm dashed}$ indicating whether a potential edge is activated.

The objective is to minimize the number of additional edges activated:
\begin{equation}
\min \sum_{(u,a) \in E_{\rm dashed}} x(u,a).
\label{eq:ilp-objective}
\end{equation}

Each user's transactions must be fully routed, captured by
\begin{equation}
\sum_{a \in A} f(u,a) = t_u, \quad \forall u \in U,
\label{eq:ilp-flowconservation}
\end{equation}
while the capacities of apps must not be exceeded:
\begin{equation}
\sum_{u \in U} f(u,a) \le c_a, \quad \forall a \in A.
\label{eq:ilp-capacity}
\end{equation}

Flow through a dashed edge is permitted only if the edge is activated:
\begin{equation}
f(u,a) \le t_u \cdot x(u,a), \quad \forall (u,a) \in E_{\rm dashed}.
\label{eq:ilp-activation}
\end{equation}

Flow on solid edges is unconstrained by activation, i.e., $f(u,a) \ge 0$ for $(u,a) \in E_{\rm solid}$. All flows are integral:
\begin{equation}
f(u,a) \in \mathbb{Z}_{\ge 0}, \quad x(u,a) \in \{0,1\}.
\label{eq:ilp-integrality}
\end{equation}

\medskip

\noindent\textbf{Max-Min Fairness ILP: }For the fair variant, we keep the same routing variables and add $y_a$ for additional installations on app $a$ and $T$ for the minimum installation level.

The objective becomes:
\begin{equation}
\max T.
\label{eq:maxmin-objective}
\end{equation}

The extra constraints are:
\begin{equation}
y_a = \sum_{u \in U} x(u,a), \quad \forall a \in A,
\label{eq:maxmin-installations}
\end{equation}

\begin{equation}
y_a \ge T, \quad \forall a \in A,
\label{eq:maxmin-fairness}
\end{equation}

\begin{equation}
\sum_{a \in A} y_a \le B.
\label{eq:maxmin-budget}
\end{equation}

Here, $B$ is not a free parameter: it is set to the total number of additional app installations returned by the Install-Minimization ILP. Therefore, the second ILP does not increase the installation volume; instead, it redistributes the same installation budget across apps to maximize the minimum installation level $T$.

Together, these formulations capture both efficiency and fairness in the routing model. The first ILP determines how many new installations are minimally necessary to serve all demand, and the second uses exactly that amount as a cap and decides how to distribute installations so that the least-installed app is as high as possible.

\subsection{Offline Strategy}

Although the ILP formulations provide optimal allocations, they become computationally expensive at large scales. To improve scalability, we develop an offline greedy heuristic called \DTASf\ (Algorithm~\ref{alg:dtas}). 

In the offline setting, all transaction requests are available in advance, allowing users to be reordered prior to allocation. As shown in Algorithm~\ref{alg:dtas}, transactions for each user are allocated using a three-layer strategy:
\begin{enumerate}
    \item preinstalled applications,
    \item previously activated extra applications,
    \item new app installations only when necessary.
\end{enumerate}

By prioritizing reuse before introducing new installations, \DTASf\ reduces redundant app activations while improving capacity utilization.

A natural strategy is to process users in descending order of their transaction-to-app ratio, prioritizing heavy users first. Intuitively, allocating demanding users early appears beneficial because they have larger transaction loads and potentially fewer feasible allocation choices. However, empirical analysis on real transaction traces revealed a highly skewed workload distribution, where a small fraction of users generated the majority of transactions (see Figure~\ref{fig:user-skew}). Prioritizing these heavy users rapidly saturated frequently installed high-capacity applications, leaving lightweight users with fewer reusable options and forcing additional application installations despite sufficient aggregate capacity being available.

To address this issue, Algorithm~\ref{alg:dtas} instead processes users in ascending order of their transaction-to-app ratio, prioritizing lightweight users first. Since lightweight users constitute the majority, serving them early preserves allocation flexibility and significantly reduces unnecessary installations while maintaining scalability.

\textbf{Insights from \DTAS.}
The behavior of \DTAS revealed that the timing of heavy-user allocations has a substantial impact on system efficiency. Immediately serving heavy users reduces future reuse opportunities by consuming a large fraction of the available application capacity early in the allocation process. In contrast, postponing heavy-user requests allows lightweight users to exploit existing installations first, preserving flexibility and reducing redundant application activations.

This observation directly motivates the delay mechanism in \textsc{Online\_\DTAS}, where dynamically identified heavy users are temporarily stalled and scheduled later to improve long-term allocation efficiency.


\begin{algorithm}[t]
\caption{\DTAS: Decoupled Two-Stage Allocation Strategy}
\label{alg:dtas}
\small
\KwIn{
Users $U$, transaction demand $T_u$ for each user $u$,\\
preinstalled applications $A_u$, application capacity $c$
}
\KwOut{
Updated application assignments and transaction allocation
}

Initialize remaining capacity $cap[a]\gets c$ for all applications $a$\;

Initialize shared application pool:
\[
P \gets \bigcup_{u\in U} A_u
\]

Sort users by increasing demand and number of installed apps\;

\BlankLine
\textbf{Phase 1: Allocate using preinstalled applications}

\ForEach{$u\in U$}{
    $remaining \gets |T_u|$\;

    Sort $A_u$ by decreasing $cap[a]$\;

    \ForEach{$a\in A_u$}{
        allocate as many transactions as possible to $a$\;

        update $cap[a]$\;

        \If{$remaining =0$}{
            break
        }
    }

    store remaining demand of user $u$\;
}

\BlankLine
\textbf{Phase 2: Satisfy unmet demand}

\ForEach{$u\in U$ with remaining demand}{

    \tcp{Step 2a: borrow from existing shared apps}

    candidateApps $\gets P\setminus A_u$\;

    Sort candidateApps by decreasing $cap[a]$\;

    allocate transactions while capacity exists\;

    \BlankLine

    \tcp{Step 2b: install fresh applications if needed}

    \If{demand still remains}{

        freshApps $\gets$ apps not in $A_u$ and not in $P$\;

        Sort freshApps by decreasing $cap[a]$\;

        allocate transactions\;

        add newly installed apps to $P$\;
    }
}

Return allocation and updated application assignments\;

\end{algorithm}

\subsection{Online Strategies}
In practice, transaction-routing decisions arise in a dynamic environment where requests arrive sequentially and decisions must be made in real time. At each step, the system faces a choice: continue routing through currently available applications or activate additional applications when necessary. Since future requests are unknown, decisions must be made online using only the information available at the current time.

\begin{figure}[h]
    \centering
    \includegraphics[width=\linewidth]{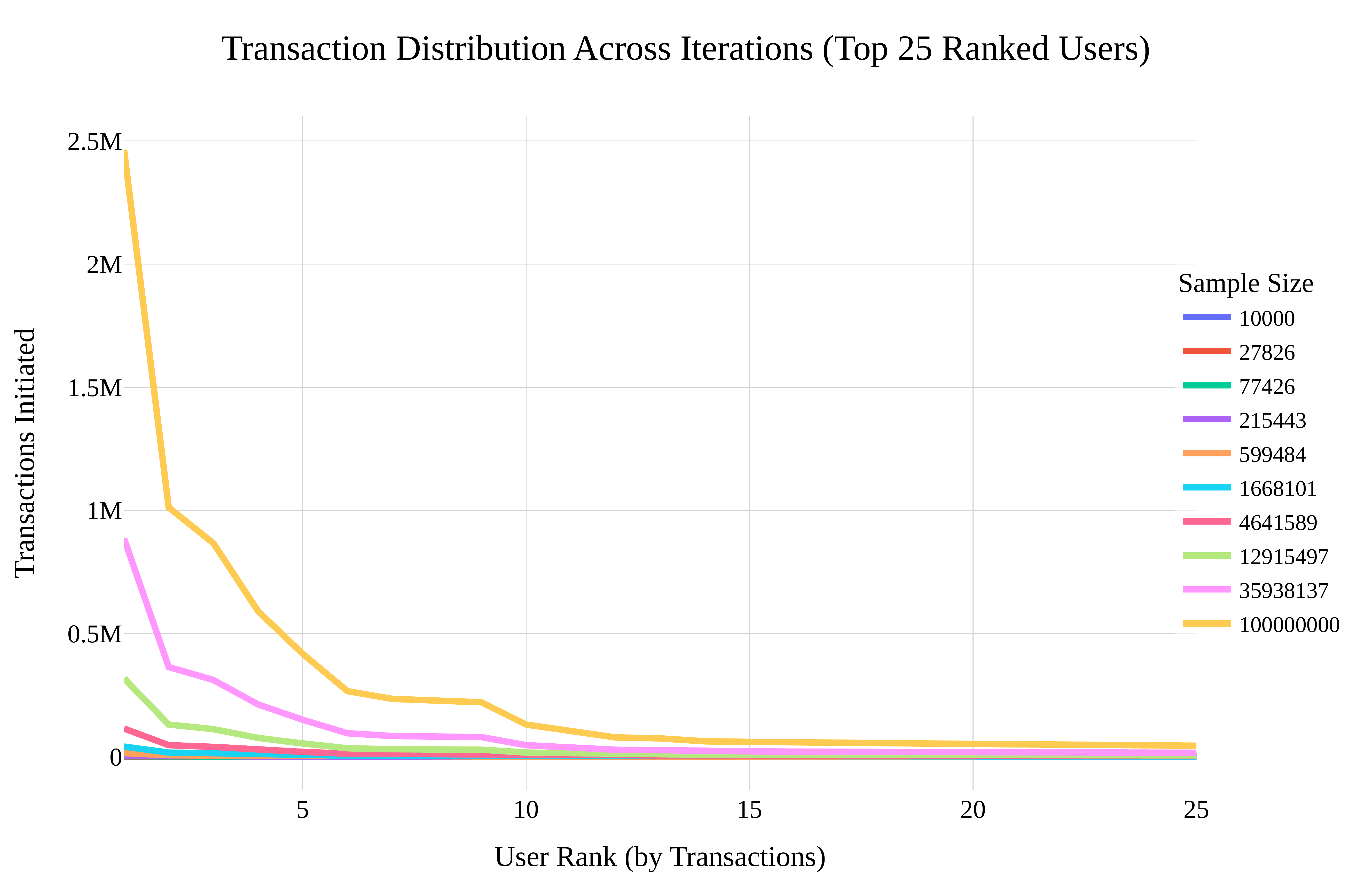}
    \caption{Distribution of user transaction volumes showing a highly skewed workload, where a small fraction of users contributes a disproportionately large number of transactions.}
    \label{fig:user-skew}
\end{figure}

Motivated by this setting, we propose two online strategies: \textsc{Online\_DTAS} and \textsc{Fair\_DTAS}.
\begin{itemize}
    \item \noindent\textbf{\textsc{Online\_\DTASh}}.
To handle streaming transaction arrivals, we further extend \DTAS into an online framework (Algorithm~\ref{alg:online-dtas}). Unlike the offline setting, complete user information is not available in advance, making it impossible to identify heavy and light users beforehand. Therefore, \textsc{Online\_\DTAS} employs an adaptive frequency-tracking mechanism based on the \textit{Space-Saving} algorithm~\cite{metwally2005spacesaving} to dynamically identify heavy users from streaming transaction data.

As illustrated in Algorithm~\ref{alg:online-dtas}, the scheduler processes a unified transaction stream and continuously updates user frequencies through a compact Space-Saving sketch. Rather than maintaining explicit counters for every user, the sketch stores only a bounded number of entries and incrementally tracks approximate transaction frequencies using limited memory. For each incoming transaction, the corresponding user frequency estimate is updated within the sketch. Periodically, sketch counters are aged using multiplicative decay and users are reclassified according to the highest estimated frequency values. Specifically, users whose sketch scores fall within the top percentile are identified as heavy users, while the remaining users are treated as light users. This mechanism enables the algorithm to adapt to evolving workload patterns while preventing stale historical activity from dominating future classifications.

Heavy-user transactions are temporarily delayed and inserted into a bounded priority queue ordered by estimated transaction frequency, ensuring that the \textit{least-heavy} delayed users are released first. In contrast, light-user requests are immediately allocated using the three-tier application selection strategy shown in Algorithm~\ref{alg:find-app}, inherited from offline \DTAS:

    \begin{algorithm}[t]
\caption{\textsc{Online\_}\DTAS: Online DTAS with Heavy-User Delaying}
\label{alg:online-dtas}
\small

\KwIn{
Transaction stream $S$, preinstalled apps $A_u$,\\
application capacity $c$
}

\KwOut{
Updated application assignments and transaction allocation
}

Initialize remaining capacities $cap[a]\gets c$\;

Reserve a fraction $\alpha c$ of each application's capacity\;

Initialize shared application pool:
\[
P\gets \bigcup_{u\in U}A_u
\]

Initialize Space-Saving sketch with at most $k$ counters\;

Initialize heavy-user set $H\gets\emptyset$\;

Initialize bounded delay queue $Q$\;

Initialize light transaction counter\;

\BlankLine

\ForEach{transaction $(u,t)\in S$}{

    Update frequency estimate of user $u$
    in the Space-Saving sketch\;

    \BlankLine

    \If{reclassification interval reached}{

        Age sketch counters using decay factor $\rho$\;

        Identify heavy users as top-percentile
        sketch entries\;

        Update heavy-user set $H$\;
    }

    \BlankLine

    \eIf{$u\in H$}{

        Insert transaction into delay queue $Q$\;

        \If{$|Q|$ exceeds threshold}{

            Drain least-heavy delayed request\;
        }

    }{

        Allocate transaction immediately using
        Algorithm~\ref{alg:find-app}\;

        Increment light transaction counter\;

        \If{drain interval reached}{

            Determine drain count based on queue size\;

            Drain delayed transactions from $Q$\;
        }
    }
}

\BlankLine

Drain all remaining delayed transactions\;

Return allocation\;

\end{algorithm}

    \begin{algorithm}[t]
    \caption{Application Selection Procedure}
    \label{alg:find-app}
    \small
    
    \KwIn{
    User $u$, installed apps $A_u$,\\
    shared pool $P$, remaining capacities
    }
    
    \KwOut{
    Selected application
    }
    
    \textbf{Tier 1:} Search installed applications\;
    
    \If{feasible app exists}{
        Return highest affinity, highest capacity app\;
    }
    
    \BlankLine
    
    \textbf{Tier 2:} Search shared-pool applications\;
    
    \If{feasible app exists}{
        Return highest affinity, highest capacity app\;
    }
    
    \BlankLine
    
    \textbf{Tier 3:} Search fresh applications\;
    
    \If{feasible app exists}{
        Return highest affinity, highest capacity app\;
    }
    
    Return NULL\;
    
    \end{algorithm}
    
\item \noindent\textbf{\textsc{Fair\_\DTASh}}:
\textsc{Fair\_\DTAS} extends the base online \DTAS\ framework by incorporating an explicit fairness objective into the application selection process (Algorithm~\ref{alg:fair-dtas}). The overall scheduling pipeline—including the streaming transaction model, heavy-user classification, delayed queue mechanism, adaptive draining policy, and capacity reservation strategy—remains identical to the base online framework described in Algorithm~\ref{alg:online-dtas}. The primary modification lies in replacing the application selection strategy with a fairness-aware allocation rule.

As shown in Algorithm~\ref{alg:fair-dtas}, the key idea behind \textsc{Fair\_\DTAS} is to balance transaction load across applications while minimizing unnecessary installations. Instead of selecting applications solely based on availability and reuse, the algorithm evaluates feasible applications using a fairness-aware composite scoring function:

\begin{equation}
Score(a)=
\alpha \cdot InstallPenalty(a)
+
\beta \cdot \Delta Fair(a),
\end{equation}

where \textit{InstallPenalty(a)} is zero for already-installed applications and positive otherwise, scaled by user--app affinity to encourage reuse. The fairness term $\Delta Fair(a)$ estimates the incremental increase in global load imbalance if the incoming transaction is assigned to application $a$. Intuitively, applications that are already heavily loaded incur larger fairness penalties, whereas lightly utilized applications receive lower penalties. In our implementation, this quantity is computed as the increase in load variance across applications after hypothetically assigning the transaction to app $a$. Consequently, assignments that disproportionately increase concentration on popular applications become less attractive.

The parameters $\alpha$ and $\beta$ determine the trade-off between minimizing installation overhead and improving fairness. Larger values of $\alpha$ favor application reuse and lower installation cost, whereas larger values of $\beta$ emphasize balanced load distribution.

The selected application is the one with minimum score; ties are broken using \textit{(i) higher user--app affinity} and \textit{(ii) larger remaining capacity}. By penalizing allocations that increase load imbalance, Algorithm~\ref{alg:fair-dtas} avoids excessive concentration on a small set of applications and promotes more balanced utilization.

By explicitly incorporating fairness into allocation decisions, \textsc{Fair\_\DTAS} achieves improved load balancing across applications while preserving the low-installation characteristics of the original \DTAS\ framework, albeit with a modest increase in installation overhead.

    \begin{algorithm}[t]
\caption{\textsc{Fair\_}\DTAS: Fairness-Aware Online DTAS}
\label{alg:fair-dtas}
\small

\KwIn{
Transaction stream $S$, preinstalled apps $A_u$,\\
application capacity $c$, fairness weights $\alpha,\beta$
}

\KwOut{
Updated application assignments and transaction allocation
}

Execute the online scheduling procedure from
Algorithm~\ref{alg:online-dtas}\;

\BlankLine

Replace the application-selection strategy with the following
fairness-aware procedure:\;

\ForEach{incoming transaction of user $u$}{

    Determine feasible applications:
    \[
    F\gets\{a \mid cap[a]>0\}
    \]

    \ForEach{$a\in F$}{

        Compute installation cost:
        
        \[
        InstallCost(a)=
        \begin{cases}
        0,& a\in A_u\\
        \frac{1}{1+\textit{Affinity}(u,a)},
        &\text{otherwise}
        \end{cases}
        \]

        Estimate fairness impact:
        
        \[
        \Delta Fairness(a)
        \]

        Compute combined score:
        
        \[
        Score(a)
        =
        \alpha \cdot InstallCost(a)
        +
        \beta \cdot \Delta Fairness(a)
        \]
    }

    Select application with minimum score\;

    Break ties using:
    
    \begin{enumerate}
        \item higher user--app affinity
        \item larger remaining capacity
    \end{enumerate}

    Assign transaction and update app load\;
}

Return allocation\;

\end{algorithm}
    \end{itemize}

\noindent

\begin{figure*}[t]
\centering

\begin{subfigure}[b]{0.4\linewidth}
    \centering
    \includegraphics[width=\linewidth]{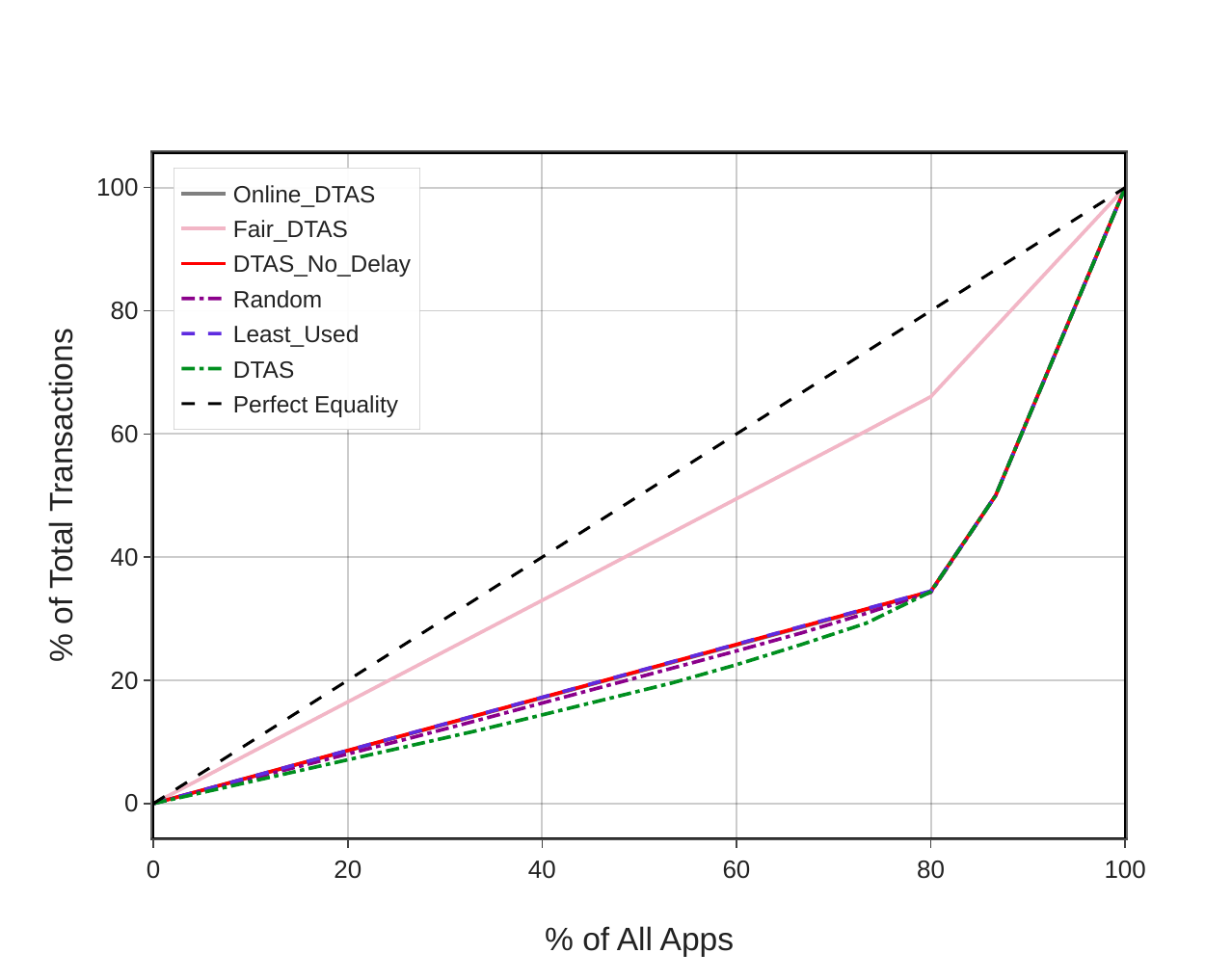}
    \caption{Lorenz curve for various algorithms.}
    \label{fig:sub1}
\end{subfigure}
\hspace{0.01\linewidth}
\begin{subfigure}[b]{0.55\linewidth}
    \centering
    \includegraphics[width=\linewidth]{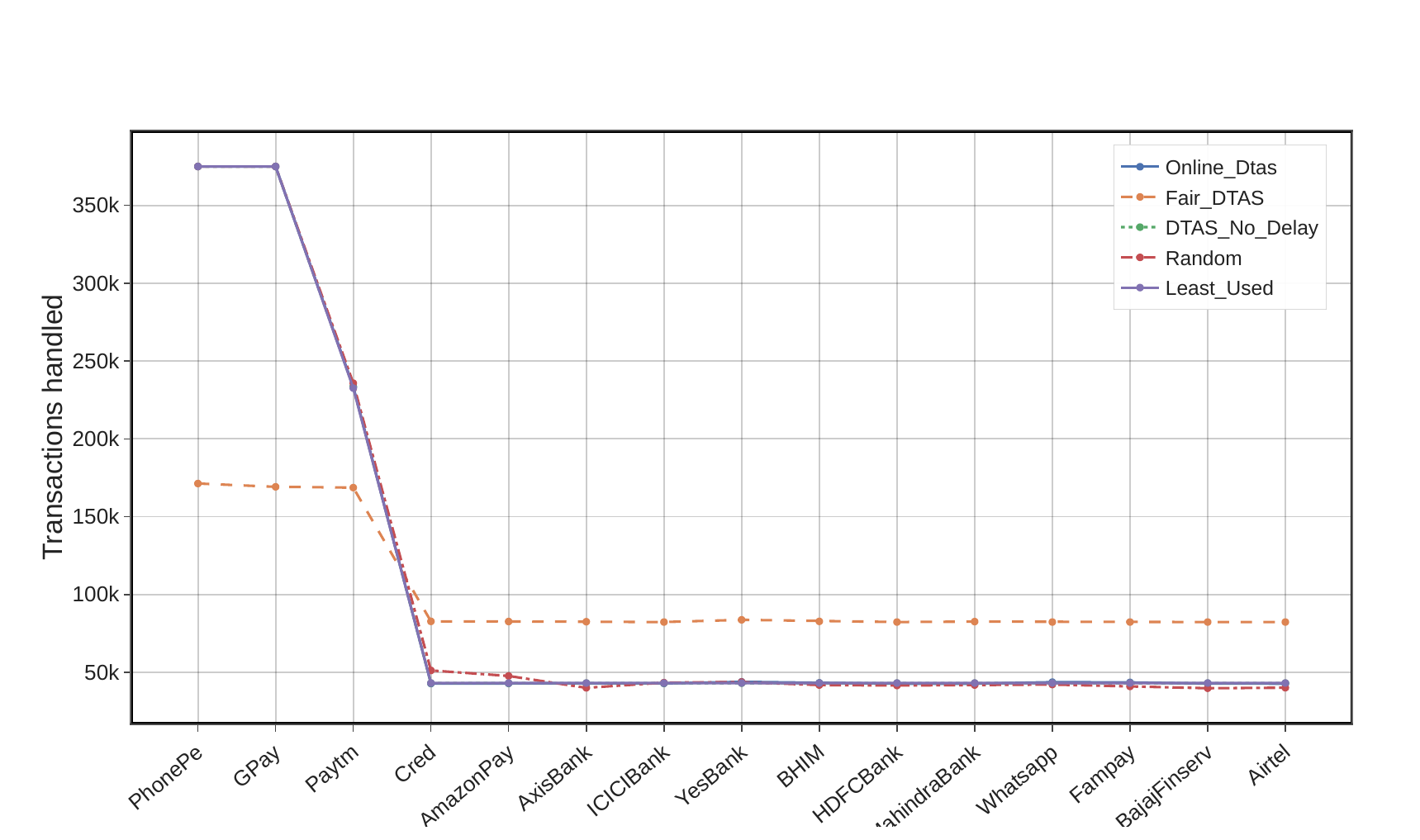}
    \caption{Load Distribution across apps.}
    \label{fig:sub2}
\end{subfigure}

\caption{Lorenz curve comparison across allocation approaches.}
\label{fig:lorenz-curve}
\end{figure*}

\section{Experimental Evaluations}
In this section, we present the experimental analysis on semi- synthetic data generated from Rabobank Transactinos data.

\textbf{Computational Environment.} We conduct experiments on a Linux-based system with an Intel(R) Xeon(R) CPU E5-2630 v2 @ 2.60\,GHz and 125\,GB RAM running Ubuntu 22.04 LTS. All algorithms are implemented in Python 3.12.3, and the Gurobi Optimizer (version 12.0.3) is used for solving the ILP formulations. Each experimental configuration represents a distinct combination of users, transaction workloads, and UPI applications under varying capacity constraints.

\textbf{Dataset Description.} Since real-world UPI transaction data is proprietary and not publicly accessible due to privacy and regulatory constraints, we use the Rabobank Transaction Dataset~\citep{rabobankdataset} as the underlying transaction corpus. The dataset contains anonymized records of banking activity within Rabobank collected over an 11-year period (2010--2020), covering 1,624,030 bank accounts and 4,127,043 transaction relationships. The original dataset includes customer-to-customer (C2C), customer-to-business (C2B), and business-to-business (B2B) interactions.

To approximate a UPI ecosystem, we assign each user a set of preinstalled payment applications according to NPCI-reported monthly transaction shares. Consequently, users are more likely to adopt high-volume applications such as PhonePe and GPay while maintaining representation of lower-volume applications. This augmentation captures realistic application adoption patterns while preserving diversity in user preferences.

Using this enriched dataset, we generate experimental instances across varying scales ranging from 10,000 to 100 million transactions and up to 1.2 million users under different transaction-to-user ratios and uniform application-capacity settings.



\shortversion{In this section, we present our empirical experiments and discuss the resulting performance insights.}



\subsection{Baselines}

To benchmark the proposed \textbf{\DTAS}, \textsc{Online\_DTAS}, and \textsc{Fair\_\DTAS} algorithms, we compare them against the following baseline strategies:

\begin{itemize}

\item \textbf{\textsc{DTAS\_No\_Delay}}.
An ablation of \textsc{Online-\DTAS} where the stalling mechanism is removed. Transactions are allocated immediately using the same three-tier strategy, i.e. \textit{preinstalled applications, shared-pool applications,} and \textit{new applications} with affinity-based tie-breaking. This baseline isolates the contribution of delayed scheduling and adaptive heavy-user handling.

\item \textbf{\textsc{Random}}.
A simple allocation strategy where each transaction is assigned uniformly at random among feasible applications. Preinstalled applications are preferred whenever available; otherwise, any application with remaining capacity is selected randomly. This serves as a lower-bound baseline with no optimization or workload awareness.

\item \textbf{\textsc{Least Used}}.
A greedy load-balancing baseline that assigns each transaction to the feasible application with the smallest current load (number of handled transactions), while following the same tiered allocation structure. This baseline explicitly prioritizes load balancing without considering installation overhead.


\end{itemize}

\shortversion{
\begin{figure*}[h!]
    \centering
    \begin{subfigure}[b]{0.32\textwidth}
        \centering
        \includegraphics[width=\linewidth]{Pictures/PBAA/transaction_distribution_acm.pdf}
        \caption{Transaction Skewness across users.}
        \label{fig:transaction-skewness}
    \end{subfigure}
    \hfill
    \begin{subfigure}[b]{0.32\textwidth}
        \centering
        \includegraphics[width=\linewidth]{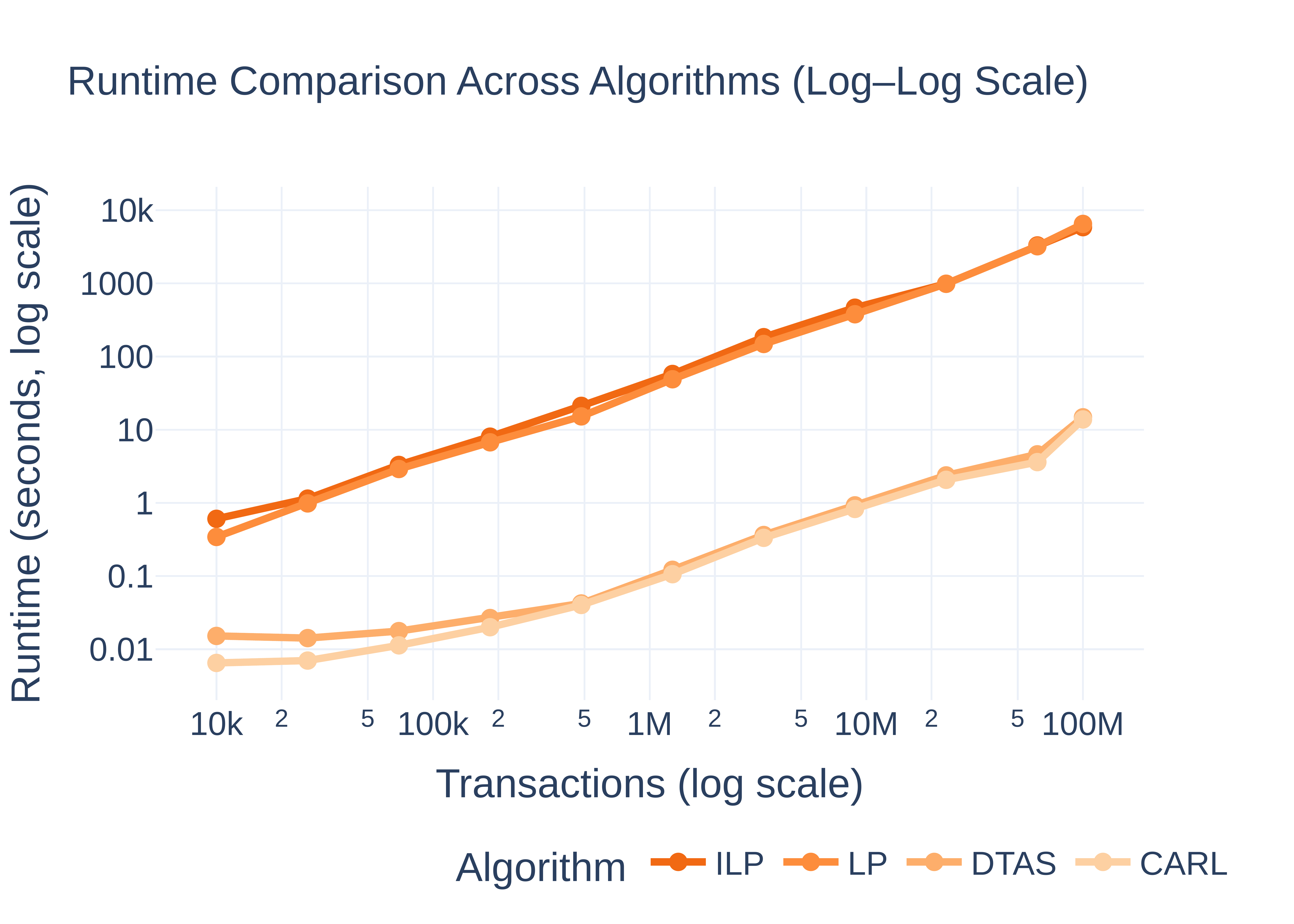}
        \caption{Execution Times (ILP, LP, \CARL, and \DTAS).}
        \label{fig:execution-time}
    \end{subfigure}
    \hfill
    \begin{subfigure}[b]{0.32\textwidth}
        \centering
        \includegraphics[width=\linewidth]{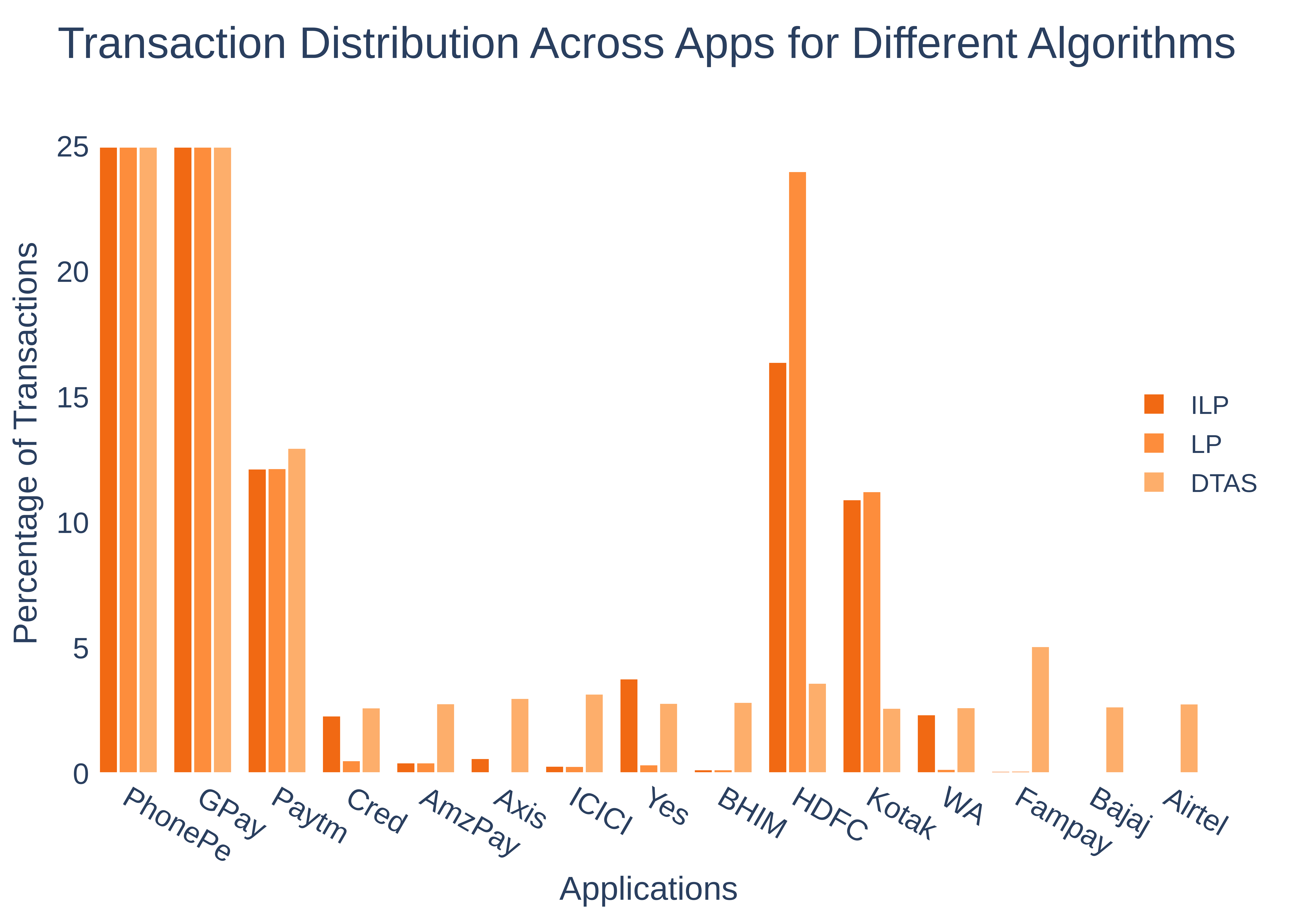}
        \caption{Transaction Allocation across UPI apps.}
        \label{fig:transaction-allocation}
    \end{subfigure}
    
    \caption{Comparison of allocation patterns and execution times across different approaches.}
    \label{fig:combined-figure}
\end{figure*}}
\longversion{\subsection{Evaluation Metrics}
We evaluate all algorithms using the following metrics:

\begin{itemize}

\item \textbf{Per-App Load Distribution.}
Let $\ell_a$ denote the number of transactions routed to application $a$. Since the primary fairness objective is to avoid concentrating transactions on a small subset of applications, we analyze the resulting load distribution across apps. More balanced distributions indicate better utilization of system capacity.

\item \textbf{Extra App Installations.}
We measure the total number of activated edges $|E'|$, corresponding to additional application installations beyond users' preinstalled apps. Lower values indicate greater installation efficiency and reduced user friction.

\item \textbf{Fairness Comparison (Jain's Index).}
To quantify load balancing across applications, we compute Jain's fairness index over application loads:
\[
J(L)=
\frac{\left(\sum_{a\in A}\ell_a\right)^2}
{|A|\sum_{a\in A}\ell_a^2}.
\]

Jain's index ranges from $0$ to $1$, where values closer to $1$ indicate a more uniform distribution of transaction loads across applications.

\end{itemize}}

\shortversion{
We evaluate each algorithm using the following metrics. 
\begin{itemize}
    \item \textbf{Number of App Installations:} Measures the total number of applications required to satisfy all transactions. Lower values indicate higher efficiency.
    \item \textbf{Execution Time:} Represents the total computational time taken to reach a feasible allocation.
    \item \textbf{Fairness Index:} Quantifies how evenly transactions are distributed across applications, measured using the \textbf{inverse Gini score}.
\end{itemize}
}

\begin{table*}[t]
\centering

\begin{subtable}[t]{0.48\textwidth}
\centering
\resizebox{\linewidth}{!}{
\begin{tabular}{lccc}
\toprule
\textbf{Algorithm} & \textbf{Extra Inst.} & \textbf{Jain's Index} & \textbf{Time (s)} \\
\midrule
\textsc{Online\_\DTAS}    & 152,344 & 0.4190 & 17.04 \\
\textsc{Fair\_\DTAS}      & 183,907 & 0.8916 & 130.30 \\
\textsc{\DTAS\_No\_Delay} & 155,671 & 0.4191 & 8.71 \\
\textsc{Random}           & 362,267 & 0.4178 & 7.16 \\
\textsc{Least\_Used}      & 373,117 & 0.4194 & 7.71 \\
\bottomrule
\end{tabular}
}
\caption{Performance comparison of allocation strategies}
\label{tab:freq-results}
\end{subtable}
\hfill
\begin{subtable}[t]{0.48\textwidth}
\centering
\resizebox{\linewidth}{!}{
\begin{tabular}{ccc}
\toprule
\textbf{App Cap.} &
\textbf{Users w/ Remaining Demand (\%)} &
\textbf{Unallocated Transactions} \\
\midrule
10 & 2.25\% & 65,982,913 \\
15 & 1.00\% & 50,982,913 \\
20 & 0.54\% & 40,418,301 \\
25 & 0.29\% & 30,418,301 \\
30 & 0.13\% & 20,418,301 \\
35 & 0.02\% & 10,418,301 \\
40 & $\approx 0\%$ & 1,525,685 \\
\bottomrule
\end{tabular}
}
\caption{Impact of app capacity on remaining demand}
\label{tab:capacity_effect}
\end{subtable}

\caption{Experimental results across allocation performance and capacity sensitivity analyses.}
\label{tab:combined_results}
\end{table*}



\subsection{Quantitative Results}

In this section, we analyze the proposed allocation strategies from the perspectives of fairness, installation overhead, and load distribution across applications. We examine trade-offs between balancing application utilization and minimizing additional installations, while studying how algorithmic choices influence these objectives.

Figures~\ref{fig:sub1} and~\ref{fig:sub2} present Lorenz curves and per-application load distributions across allocation strategies. The Lorenz curves provide a visual interpretation of fairness, where curves closer to the equality line indicate more balanced transaction distributions.

As expected, \textsc{Fair\_\DTAS} remains closest to the equality line, indicating a more uniform allocation than other methods. In contrast, \textsc{Online\_\DTAS}, \textsc{\DTAS\_No\_Delay}, and baseline approaches exhibit greater concentration, consistent with lower fairness.

Table~\ref{tab:freq-results} reports experiments on a workload of 1.5 million transactions and highlights a clear trade-off between fairness and installation overhead. \textsc{Fair\_\DTAS} achieves the strongest fairness among all methods, but this comes with increased installation cost. Conversely, \textsc{Online\_\DTAS} prioritizes minimizing additional installations, resulting in lower fairness but reduced overhead.

The lower fairness observed in \textsc{Online\_\DTAS} is partly driven by skew in application availability across users. Since PhonePe and GPay are preinstalled for many users, and \textsc{Online\_\DTAS} prioritizes already-installed applications before recommending new installations, these applications naturally receive a larger transaction share and higher capacity utilization. Thus, part of the observed concentration arises from user-installation distributions rather than purely algorithmic decisions.

The effect of delaying heavy users can be observed by comparing \textsc{Online\_\DTAS} and \textsc{\DTAS\_No\_Delay}. While both methods exhibit nearly identical fairness behavior, removing the delay mechanism increases installation overhead without meaningful fairness gains. This suggests that selectively postponing heavy users improves allocation efficiency by reducing unnecessary installations.

Baseline methods such as \textsc{Random} and \textsc{Least\_Used} incur higher installation overhead while exhibiting fairness characteristics similar to \textsc{Online\_\DTAS}. In particular, \textsc{Online\_\DTAS} consistently outperforms \textsc{Least\_Used}, indicating that allocation decisions cannot rely solely on selecting applications with maximum available capacity. Instead, applications already installed by a user play a critical role in efficient allocations. This highlights the importance of reuse-aware routing, where prioritizing existing installations significantly reduces overhead without sacrificing allocation quality.

Overall, the results demonstrate a trade-off between fairness and installation efficiency while highlighting adaptive scheduling mechanisms.

\noindent
\longversion{For the experiments in Table~\ref{tab:capacity_effect} and Table~\ref{tab:installations}, we consider the complete dataset consisting of $1.21$ million users and $100$ million transactions to evaluate varying application capacity limits.

Table~\ref{tab:installations} compares additional application installations across transaction scales. \DTAS\ closely matches ILP performance while remaining scalable. Online variants incur higher installation overhead due to lack of future information. \textsc{Fair\_}\DTAS\ improves fairness at increased cost, while increased installations in \textsc{\DTAS\_No\_Delay} highlight the benefit of delaying heavy-user requests.

As shown in Table~\ref{tab:capacity_effect}, increasing app capacity sharply reduces users requiring new installations from approximately $2.25\%$ at a $10\%$ capacity limit to nearly zero at $40\%$. Most users can therefore be fully served under moderate settings, while less than $2.3\%$ experience unmet demand.

However, increased capacity also accentuates duopoly tendencies, where a few dominant applications absorb a disproportionate share of total transaction load. Such concentration may negatively affect competition and resilience. Conversely, lowering app capacity reduces dominance but results in over 65 million unallocated transactions at a $10\%$ limit, indicating potential service degradation under strict caps. If regulatory constraints such as NPCI thresholds were enforced, remaining transactions would directly represent unserved demand, potentially causing user friction and reduced system efficiency.
}

\begin{table*}
\centering
\scriptsize
\resizebox{\textwidth}{!}{
\begin{tabular}{rrrrrrr}
\hline
\textbf{Transactions} &
\textbf{ILP} &
\textbf{LP} &
\textbf{\DTASh} &
\textbf{\textsc{Online\_}\DTASh} &
\textbf{\textsc{Fair\_}\DTASh} &
\textbf{\textsc{\DTASh\_No\_Delay}} \\
\hline

10,000      & 5      & 3.78    & 5        & 26        & 104      & 37 \\


63,095      & 25     & 24.22   & 25       & 178       & 698      & 533 \\


398,107     & 90     & 88.99   & 93       & 690       & 3,979    & 2,782 \\

1,000,000   & 198    & 196.88  & 198      & 2,031     & 7,771    & 7,799 \\


6,309,573   & 1,314  & 1311.82 & 1,314    & 40,190    & 59,145   & 53,005 \\


39,810,717  & 7,676  & 7668.81 & 7,676    & 320,510   & 406,657  & 353,934 \\

100,000,000 & ***    & ***     & 6,599    & 738,726   & 904,179  & 797,000 \\

\hline
\end{tabular}
}

\caption{Comparison of Installations Across Different Algorithms}
\label{tab:installations}
\vspace{1mm}
\begin{minipage}{\textwidth}
\footnotesize
\textit{Note:} LP values correspond to lower bounds obtained from the linear relaxation of the optimization problem. `***` indicates that ILP and LP could not be solved within practical runtime limits. \DTASh\ assumes complete future knowledge and serves as an offline baseline. \textsc{Online\_}\DTASh\ performs transaction allocation without future information. \textsc{Fair\_}\DTASh\ incorporates fairness objectives during allocation. \textsc{\DTASh\_No\_Delay} removes the stalling mechanism and allocates transactions immediately.
\end{minipage}
\end{table*}
\subsection{Practicalities and Discussion}

Given increasing concerns regarding market concentration and the emerging UPI duopoly, transaction-capacity regulations may eventually be enforced by NPCI. Our work provides an initial framework toward implementing such regulations while minimizing disruption to user behavior and market dynamics. Beyond demonstrating feasibility, our results provide a quantified path toward compliance. Across transaction scales, our ILP and DTAS formulations identify the minimum number of incremental user--application connections required to satisfy capacity constraints. At 100 million transactions, \DTAS\ achieves compliance using only 6,599 additional installations while remaining computationally tractable, providing policymakers with practical estimates for sizing incentive budgets such as cashback programs or merchant rewards.

Our sensitivity analysis shows that selecting an appropriate capacity threshold is critical. At a 10\% per-app cap, more than 65 million transactions remain unallocatable and approximately 2.25\% of users require additional installations, whereas both quantities become nearly negligible at 40\%. The proposed 30\% threshold lies in a steep feasibility region where small relaxations substantially reduce unmet demand without significantly altering market structure. This provides NPCI with a principled basis for calibrating thresholds rather than treating a specific percentage as arbitrary.

A practical deployment strategy could begin with a trial phase in which users are not forced to switch applications but instead receive recommendations or incentives encouraging diversification of installed applications. Data collected during this phase can provide insights into user transaction frequencies and preferences required by our framework. Our findings also suggest that intervention strategies should be carefully designed. A counterintuitive but robust observation is that redirecting lightweight users before heavy users substantially reduces installation overhead. Heavy users rapidly consume available capacity, often forcing lightweight users to install new applications despite remaining system-wide capacity. Redirecting lightweight users therefore preserves flexibility and lowers overall system cost.

Interestingly, increasing app capacity improves allocation feasibility while simultaneously worsening concentration. With larger capacity limits, dominant applications absorb a disproportionately larger share of total transaction load, potentially strengthening the same duopoly regulations seek to address. Thus, capacity limits should not merely serve as hard ceilings but as routing targets directing traffic toward underutilized applications.

For regulators prioritizing ecosystem health, \textsc{Fair\_DTAS} provides a useful blueprint. Although it incurs approximately 21\% more installations, it achieves substantially improved fairness and demonstrates that modest deployment costs can produce a more balanced application ecosystem.

Although access to real UPI transaction data remains restricted, our evaluation using a realistic financial transaction dataset suggests that the proposed framework scales effectively and can serve as a practical decision-support mechanism. The framework is also relevant beyond India, particularly as UPI expands internationally and interoperates with payment ecosystems experiencing similar concentration dynamics.

\shortversion{
\noindent
\textbf{Quantitative Results:} As shown in Table~\ref{tab:capacity_effect}, increasing the application capacity sharply reduces the share of users requiring additional installations from 2.25\% at 10\% capacity to just 0.13\% at 30\%. Even modest capacity levels therefore satisfy most user demand, with fewer than 2.3\% of users ever remaining unsatisfied. However, higher capacities also strengthen duopoly tendencies, as a few dominant apps end up handling most transactions. Lowering capacity reduces this concentration but increases the volume of dropped transactions, as reflected in Table~\ref{tab:capacity_effect}. If NPCI enforced strict per-app thresholds, the unmet loads, such as over 65 million dropped transactions at 10\% capacity, would be discarded or delayed, degrading user experience and network reliability.

\begin{table}[h!]
\centering
\caption{Effect of Application Capacity on Remaining Demand and Transactions (Experiment conducted on 1.21M users and 100M transactions)}
\label{tab:capacity_effect}
\begin{tabular}{c c c}
\hline
\textbf{App Capacity} & \textbf{Users Unsatisfied (\%)} & \textbf{Transactions Dropped} \\
\hline
0.10 & 2.25\% & 65,982,913 \\
0.15 & 1.00\% & 50,982,913 \\
0.20 & 0.54\% & 40,418,301 \\
0.25 & 0.29\% & 30,418,301 \\
0.30 & 0.13\% & 20,418,301 \\
0.35 & 0.02\% & 10,418,301 \\
\hline
\end{tabular}
\end{table}

}

\begin{table*}[!t]
\centering
\scriptsize
\resizebox{\textwidth}{!}{
\begin{tabular}{rrrrrrr}
\hline
\textbf{Transactions} &
\textbf{ILP} &
\textbf{LP} &
\textbf{\DTASh} &
\textbf{\textsc{Online\_}\DTASh} &
\textbf{\textsc{Fair\_}\DTASh} &
\textbf{\textsc{\DTASh\_No\_Delay}} \\
\hline

10,000      & 5      & 3.78    & 5        & 26        & 104      & 37 \\


63,095      & 25     & 24.22   & 25       & 178       & 698      & 533 \\


398,107     & 90     & 88.99   & 93       & 690       & 3,979    & 2,782 \\

1,000,000   & 198    & 196.88  & 198      & 2,031     & 7,771    & 7,799 \\


6,309,573   & 1,314  & 1311.82 & 1,314    & 40,190    & 59,145   & 53,005 \\


39,810,717  & 7,676  & 7668.81 & 7,676    & 320,510   & 406,657  & 353,934 \\

100,000,000 & ***    & ***     & 6,599    & 738,726   & 904,179  & 797,000 \\

\hline
\end{tabular}
}

\caption{Comparison of Installations Across Different Algorithms}
\label{tab:installations}
\vspace{1mm}
\begin{minipage}{\textwidth}
\footnotesize
\textit{Note:} LP values correspond to lower bounds obtained from the linear relaxation of the optimization problem. `***` indicates that ILP and LP could not be solved within practical runtime limits. \DTASh\ assumes complete future knowledge and serves as an offline baseline. \textsc{Online\_}\DTASh\ performs transaction allocation without future information. \textsc{Fair\_}\DTASh\ incorporates fairness objectives during allocation. \textsc{\DTASh\_No\_Delay} removes the stalling mechanism and allocates transactions immediately.
\end{minipage}
\end{table*}

\section{Conclusion}

The current UPI ecosystem is heavily dominated by a small number of applications, raising concerns regarding market concentration and motivating interventions such as NPCI's proposed 30\% market-share cap. Enforcing such constraints at scale is computationally challenging, particularly when user installation preferences and application capacities must be respected. In this work, we take a first step toward addressing this problem and propose \DTAS, a scalable allocation framework that preserves users' existing application choices while minimizing additional installations required for compliance.

Our empirical evaluation reveals several important insights. First, reuse-aware allocation substantially reduces installation overhead compared to naive load-spreading strategies. Second, application availability and existing user installations play a critical role in allocation quality, highlighting the importance of incorporating user-level constraints into routing decisions. Third, heavy-user behavior strongly influences allocation efficiency: prioritizing or immediately serving heavy users rapidly saturates popular applications and reduces future reuse opportunities. This observation motivates the delay mechanism in \textsc{Online\_DTAS}, where postponing dynamically identified heavy users improves long-term allocation efficiency without significantly affecting fairness. Finally, fairness and installation efficiency exhibit an inherent trade-off, where stronger fairness objectives improve load balancing at the cost of additional installations.

Overall, \DTAS\ and its online variants achieve performance close to optimization-based approaches while remaining computationally efficient at large scales, even when optimization becomes impractical. These findings suggest that scalable reuse-aware allocation strategies can serve as practical decision-support tools for regulators seeking to improve ecosystem balance while preserving user convenience.

Our current formulation focuses on balancing transaction counts across applications. An important future direction is incorporating transaction values into the allocation process. High-value transactions may introduce different capacity requirements, user preferences, and financial risks, potentially revealing richer trade-offs between fairness, efficiency, and ecosystem resilience.

\section*{Acknowledgements}
We thank Sreemanti Dey for her valuable help with an earlier version of this work. We are also grateful to Akrati Saxena for sharing the Rabobank transaction dataset, which significantly facilitated our empirical evaluation.

\bibliographystyle{ACM-Reference-Format}
\bibliography{main}

\appendix

\end{document}